\documentclass[a4paper,USEnglish,cleveref, autoref]{lipics-v2021}
\pdfoutput=1 %
\hideLIPIcs  %
\nolinenumbers

\usepackage{float}
\usepackage{listings}
\usepackage{xcolor}
\usepackage[utf8]{inputenc} %
\usepackage[T1]{fontenc}    %
\usepackage{booktabs}       %
\usepackage{amsfonts}       %
\usepackage{nicefrac}       %
\usepackage{microtype}      %
\usepackage{mathtools}
\usepackage{amsmath}
\usepackage{amssymb}
\usepackage{amsthm}

\usepackage{dsfont}

\newcommand{\Ab}{\ensuremath{\mathbf{A}}}

\newcommand{\Bb}{\ensuremath{\mathbf{B}}}

\newcommand{\ot}{\ensuremath{\otimes}}

\newcommand{\bp}{\mathsf{bp}}
\newcommand{\bc}{\mathsf{bc}}
\newcommand{\bpf}{\mathsf{bp}_f}
\newcommand{\bcf}{\mathsf{bc}_f}
\usepackage{pgf,tikz,pgfplots}
\pgfplotsset{compat=1.15}
\usepackage{mathrsfs}

\usetikzlibrary{fit,shapes.callouts,shapes.arrows,shapes.geometric,decorations.pathmorphing,positioning}
\usetikzlibrary{decorations,decorations.pathmorphing,decorations.pathreplacing}
\usetikzlibrary{plotmarks,calc}
\usetikzlibrary{positioning,automata,arrows}

\tikzset{
  every overlay node/.style={
    anchor=north west,
  },
}

\bibliographystyle{plainurl}%

\title{Engineering Insights into Biclique Partitions and Fractional Binary Ranks of Matrices}
\keywords{Asymptotic Binary Rank, Algorithm Engineering, Combinatorics of Bipartite Graphs, Linear Programming}
\author{Angikar Ghosal}{%
  Graduate School of Business, 
  Stanford University, 
  655 Knight Way, 
  Stanford, CA 94305}{angikar@stanford.edu}{https://orcid.org/0009-0002-9336-5342}{}
  
\author{Andreas Karrenbauer}{%
    Max Planck Institut for Informatics, Saarland Informatics Campus,
    66123 Saarbrücken, Germany \and \url{https://people.mpi-inf.mpg.de/~karrenba}}
    {andreas.karrenbauer@mpi-inf.mpg.de}
    {https://orcid.org/0000-0001-6129-3220}{}

\authorrunning{A.\ Ghosal and A.\ Karrenbauer} %

\Copyright{Angikar Ghosal and Andreas Karrenbauer} %

\begin{CCSXML}
<ccs2012>
   <concept>
       <concept_id>10003752.10003809.10003716.10011136</concept_id>
       <concept_desc>Theory of computation~Discrete optimization</concept_desc>
       <concept_significance>500</concept_significance>
       </concept>
 </ccs2012>
\end{CCSXML}

\ccsdesc[500]{Theory of computation~Discrete optimization}%

\makeatletter%
\newcommand{\@noticestring}{}
\makeatother
\begin{document}
\maketitle

\begin{abstract}
We investigate structural properties of the binary rank of Kronecker powers of binary matrices, equivalently, the biclique partition numbers of the corresponding bipartite graphs. To this end, we engineer a Column Generation approach to solve linear optimization problems for the fractional biclique partition number of bipartite graphs, specifically examining the Domino graph and its Kronecker powers.
We address the challenges posed by the double exponential growth of the number of bicliques in increasing Kronecker powers.
We discuss various strategies to generate suitable initial sets of bicliques, including an inductive method for increasing Kronecker powers. We show how to manage the number of active bicliques to improve running time and to stay within memory limits.
Our computational results reveal that the fractional binary rank is not multiplicative with respect to the Kronecker product. Hence, there are binary matrices, and bipartite graphs, respectively, such as the Domino, where the asymptotic fractional binary rank is strictly smaller than the fractional binary rank. While we used our algorithm to reduce the upper bound, we formally prove that the fractional biclique cover number is a lower bound, which is at least as good as the widely used isolating (or fooling set) bound. For the Domino, we obtain that the asymptotic fractional binary rank lies in the interval $[2,2.373]$. Since our computational resources are not sufficient to further reduce the upper bound, we encourage further exploration using more substantial computing resources or further mathematical engineering techniques to narrow the gap and advance our understanding of biclique partitions, particularly, to settle the open question whether binary rank and biclique partition number are multiplicative with respect to the Kronecker product.

\end{abstract}

\section{Introduction}

Algorithm engineering is often used in an applied context or to engineer algorithms that perform well on practical instances. In this paper, we use algorithm engineering to make progress on a structural mathematical question regarding the \emph{asymptotic (fractional) binary rank} of binary matrices, which is related to the (fractional) biclique partition number of bipartite graphs. This problem has applications to communication complexity and information theory~\cite{Chee2024}.

Whenever we have a rank function, say $r: \mathbb{R}^{n\times m} \to \mathbb{R}$, that is sub-multiplicative with respect to the Kronecker product, i.e., $r(A \otimes A') \le r(A) \cdot r(A')$ for matrices $A$ and $A'$, the corresponding asymptotic rank
\[
r^\infty(A) := \lim_{k\rightarrow\infty} \sqrt[k]{r(A^{\otimes k})}
\]
is well-defined due to Fekete's Lemma~\cite{fekete1923verteilung}.

The question whether the asymptotic rank is equal to the rank, which holds if the rank function is multiplicative with respect to the Kronecker product, has received considerable attention in the past for various rank functions. It is well-known that the ordinary rank over the reals is multiplicative, thus, the asymptotic real rank is equal to the real rank. On the other hand, the asymptotic nonnegative rank can be strictly less than the nonnegative rank of a nonnegative real matrix~\cite{Beasley2013,Chee2024}. This is also true for the so-called boolean rank~\cite{watts2001boolean,Haviv2021}, whereas its fractional relaxation is multiplicative~\cite{watts2006fractional}, i.e., the asymptotic fractional boolean rank always coincides with the fractional boolean rank.

A big and longstanding open question is whether the binary rank is multiplicative~\cite{Watson2016}. Although we do not answer this question definitively, in this paper, we shed some light on it by investigating its fractional relaxation. Since the fractional boolean rank is multiplicative it is tempting to conjecture the same for the fractional binary rank. We disprove this conjecture with the explicit example of a $3 \times 3$ binary matrix that is connected to the so-called \emph{Domino} graph. Moreover, we provide upper and lower bounds for its asymptotic rank. While the lower comes from its fractional boolean rank and is proven mathematically, the upper bound is derived computationally by solving linear optimization problems with exponentially many variables. To this end, we engineered a column generation approach that allows us to solve the linear programs for higher and higher Kronecker powers bringing down the upper bound from $2.5$ to about $2.373$ while our lower bound is $2$. We admit that the gap is still large, but we hope that our approach inspires other with larger computational resources (particularly more than 256 GB RAM) or by further engineering (e.g., size reduction by exploiting symmetries) to pick up the challenge and close the gap further.

\section{Preliminaries and Notation}

Consider an $m\times n$ matrix $A$, where entries are $0$ or $1$, i.e., $A_{ij}\in\{0,1\}$ and the following alternative characterization of its \emph{real} rank, i.e., its rank over the reals:
\[
rank(A) = \min \{ r : \exists U \in \mathbb{R}^{n \times r}, V \in \mathbb{R}^{r \times m} \text{ s.t.\ } U \cdot V = A \}
\]
Similarly, we have the following quantities defined \cite{watts2001boolean,Watson2016,haviv2023binarybooleanrankregular}:
\begin{enumerate}
\item The non-negative rank $rank_{\ge 0}(A)$, where the factors are required to be non-negative matrices, i.e., 
\[
rank_{\ge 0}(A) = \min \{ r : \exists U \in \mathbb{R}_{\ge 0}^{n \times r}, V \in \mathbb{R}_{\ge 0}^{r \times m} \text{ s.t.\ } U \cdot V = A \}.
\]
\item The binary rank $rank_{01}(A)$, where the factors are required to be binary matrices, i.e.,
\[
rank_{01}(A) = \min \{ r : \exists U \in \{0,1\}^{n \times r}, V \in \{0,1\}^{r \times m} \text{ s.t.\ } U \cdot V = A \}.
\]
\item The Boolean rank $rank_{01\updownarrow}(A)$, where Boolean arithmetic or some spread is used, i.e.,
\[
rank_{01\updownarrow}(A) = \min \{ r : \exists U \in \{0,1\}^{n \times r}, V \in \{0,1\}^{r \times m} \text{ s.t.\ } A \le U \cdot V \le r \cdot A \},
\]
where $A \le U\cdot V \le r \cdot A$ is considered entry-wise.
\end{enumerate}
There is a combinatorial meaning for the latter two quantities. Consider the bipartite graph $G_A = (V, E)$  with $V :=  [n+m]$ and $E := \left\{\{i,n+j\}: A_{ij}=1\right\}$. Let $u \in \{0,1\}^n$ and $v \in \{0,1\}^m$. If the binary rank-one matrix $u v^T$ is upper bounded by $A$ entry-wise, i.e., $u v^T \le A$, then it induces a complete bipartite subgraph, also called a \emph{biclique}, of $G_A$. Moreover, a binary $r$-decomposition of $A$, i.e., $U \in \{0,1\}^{n \times r}$ and $V \in \{0,1\}^{r \times m}$  with $U\cdot V = A$ represents a partition of the set of edges of $G_A$ into $r$ bicliques. If $r = rank_{01}(A)$, then it is the biclique partition with the smallest number of bicliques and $r$ is called the \emph{biclique partition number} $\bp(G_A)$. Hence, $rank_{01}(A) = \bp(G_A) = \bp(A)$ for short. Similarly, $rank_{01\updownarrow}(A) = \bc(G_A) = \bc(A)$, the \emph{biclique cover number} of $G_A$, where an edge of $G_A$ may be contained in more than one biclique. Hence, both quantitites are the optimum objective values of combinatorial optimization problems that can be modeled using Integer Linear Optimization. To this end, consider the binary edge-biclique-incidence matrix $M$, which has a row for each edge and a column for each of the $N$ bicliques of $G_A$. This yields
\begin{equation}
\bp(A) = \min \left\{ \mathds{1}^T x : \> M x = \mathds{1}, \> x \in \{0,1\}^N \right\}\label{eqn:bp}
\end{equation}
and 
\begin{equation}
    \bc(A) = \min \left\{ \mathds{1}^T x : \> M x \ge \mathds{1}, \> x \in \{0,1\}^N \right\},\label{eqn:bc}
\end{equation}
respectively, where $\mathds{1}$ denotes an all-ones vector of appropriate dimension.
As customary in integer optimization, we also consider the corresponding linear relaxations, i.e.,
\begin{equation}
\bpf(A) = \min \left\{ \mathds{1}^T x : \> M x = \mathds{1}, \> x \ge 0 \right\}\label{eqn:bpf}
\end{equation}
and 
\begin{equation}
\bcf(A) = \min \left\{ \mathds{1}^T x : \> M x \ge \mathds{1}, \> x \ge 0 \right\},\label{eqn:bcf}
\end{equation}
respectively, where $\bpf$ and $\bcf$ are called the \emph{fractional} biclique partition and cover numbers. Correspondingly, we can define the fractional binary and Boolean ranks. That is,
\begin{align*}
   \bpf(A) &= rank_{f01}(A) \\
    &= \min \{ tr(\Lambda): \exists U \in \{0,1\}^{n \times r}, \Lambda \in diag([0,1]^r), V \in \{0,1\}^{r \times m} \text{ s.t.\ } U \Lambda V = A \},
\end{align*}
and 
\begin{align*}
    \bcf(A) &= rank_{f01\updownarrow}(A) \\
    &= \min \{ tr(\Lambda): \exists U \in \{0,1\}^{n \times r}, \Lambda \in diag([0,1]^r), V \in \{0,1\}^{r \times m} \text{ s.t.\ } A \le U \Lambda V \le r\cdot A \},
\end{align*}
respectively.

Suppose, for a bipartite graph that has $m$ and $n$ vertices in its two parts, we have a biclique that contains $m^*$ and $n^*$ vertices in its two parts, $m^*\leq m, n^*\leq n$. As all pairs of edges exist between these $m^*$ and $n^*$ vertices, one could obtain $(2^{m^*}-1)\cdot (2^{n^*}-1)$ different subbicliques from this biclique. Each of these subbicliques could belong to a bigger biclique.

Note that the biclique cover problem can be solved by using only inclusion-wise maximal bicliques, i.e., we may only consider maximal bicliques. 
Likewise, the fractional biclique cover problem can be solved accurately even if we only consider the set of maximal bicliques.

However, for the biclique partition problem (and the fractional biclique partition problem), we cannot just work with the maximal bicliques. The problem might even become infeasible. To see this, consider the following example of the so-called Domino graph with its maximal bicliques:
\[
D=\begin{bmatrix}
    1&1&0\\1&1&1\\0&1&1
\end{bmatrix};
\qquad
\mathcal{B} =
\left\{
\begin{bmatrix}
    1&1&0\\1&1&0\\0&0&0
\end{bmatrix},
\>
\begin{bmatrix}
    0&0&0\\0&1&1\\0&1&1
\end{bmatrix},
\>
\begin{bmatrix}
    0&1&0\\0&1&0\\0&1&0
\end{bmatrix},
\>
\begin{bmatrix}
    0&0&0\\1&1&1\\0&0&0
\end{bmatrix}
\right\}
\]
\begin{center}
\hspace{4mm}
\begin{tikzpicture}
  [
    every node/.style={draw, circle},
    node distance=1cm
  ]

  \node (0) {1};
  \node[right of=0] (3) {4};
  \node[below of=0] (1) {2};
  \node[right of=1] (4) {5};
  \node[below of=1] (2) {3};
  \node[right of=2] (5) {6};

  \draw (0) -- (3);
  \draw (0) -- (4);
  \draw (1) -- (3);
  \draw (1) -- (4);
  \draw (1) -- (5);
  \draw (2) -- (4);
  \draw (2) -- (5);
\end{tikzpicture}
\hspace{17mm}
\begin{tikzpicture}
  [
    every node/.style={draw, circle},
    node distance=1cm
  ]

  \node (0) {1};
  \node[right of=0] (3) {4};
  \node[below of=0] (1) {2};
  \node[right of=1] (4) {5};
  \node[below of=1] (2) {3};
  \node[right of=2] (5) {6};

  \draw (0) -- (3);
  \draw (0) -- (4);
  \draw (1) -- (3);
  \draw (1) -- (4);
  \draw[dotted] (1) -- (5);
  \draw[dotted] (2) -- (4);
  \draw[dotted] (2) -- (5);
\end{tikzpicture}
\phantom{,}\>
\begin{tikzpicture}
  [
    every node/.style={draw, circle},
    node distance=1cm
  ]

  \node (0) {1};
  \node[right of=0] (3) {4};
  \node[below of=0] (1) {2};
  \node[right of=1] (4) {5};
  \node[below of=1] (2) {3};
  \node[right of=2] (5) {6};

  \draw[dotted] (0) -- (3);
  \draw[dotted] (0) -- (4);
  \draw[dotted] (1) -- (3);
  \draw (1) -- (4);
  \draw (1) -- (5);
  \draw (2) -- (4);
  \draw (2) -- (5);
\end{tikzpicture}
\phantom{,}\>
\begin{tikzpicture}
  [
    every node/.style={draw, circle},
    node distance=1cm
  ]

  \node (0) {1};
  \node[right of=0] (3) {4};
  \node[below of=0] (1) {2};
  \node[right of=1] (4) {5};
  \node[below of=1] (2) {3};
  \node[right of=2] (5) {6};

  \draw[dotted] (0) -- (3);
  \draw (0) -- (4);
  \draw[dotted] (1) -- (3);
  \draw (1) -- (4);
  \draw[dotted] (1) -- (5);
  \draw (2) -- (4);
  \draw[dotted] (2) -- (5);
\end{tikzpicture}
\phantom{,}\>
\begin{tikzpicture}
  [
    every node/.style={draw, circle},
    node distance=1cm
  ]

  \node (0) {1};
  \node[right of=0] (3) {4};
  \node[below of=0] (1) {2};
  \node[right of=1] (4) {5};
  \node[below of=1] (2) {3};
  \node[right of=2] (5) {6};

  \draw[dotted] (0) -- (3);
  \draw[dotted] (0) -- (4);
  \draw (1) -- (3);
  \draw (1) -- (4);
  \draw (1) -- (5);
  \draw[dotted] (2) -- (4);
  \draw[dotted] (2) -- (5);
\end{tikzpicture}
\end{center}

Note that every biclique partition is also a biclique cover, so the biclique cover number is always at most the biclique partition number. Likewise, every integer biclique partition (or integer biclique cover) also satisfies the constraints of a fractional biclique partition (or fractional biclique cover), so the fractional biclique partition number is less than or equal to the biclique partition number, and the fractional biclique cover number is less than or equal to the biclique cover number. A lower bound for the fractional biclique cover number is the maximum size of an induced matching~\cite{watts2006fractional}, which is equal to the maximum number of ones $i(A)$ in an \emph{isolated} submatrix of $A$, i.e., where no pair of ones is contained in the same all-ones submatrix of $A$. Such a set of ones is also called fooling set.

For the Domino graph, which is the smallest graph where biclique partition and cover number are different, the fractional biclique partition number is $2.5$, which is given by the following decomposition:
\[
\begin{bmatrix}
    1&1&0\\1&1&1\\0&1&1
\end{bmatrix}
=
\frac{1}{2}
\begin{bmatrix}
    1&1&0\\1&1&0\\0&0&0
\end{bmatrix}
+\frac{1}{2}
\>
\begin{bmatrix}
    0&0&0\\0&1&1\\0&1&1
\end{bmatrix}
+
\frac{1}{2}
\begin{bmatrix}
    1&1&0\\0&0&0\\0&0&0
\end{bmatrix}
+
\frac{1}{2}
\begin{bmatrix}
    0&0&0\\1&0&1\\0&0&0
\end{bmatrix}
+
\frac{1}{2}
\begin{bmatrix}
    0&0&0\\0&0&0\\0&1&1
\end{bmatrix}
\]

A witness for optimality is
\[
Y^* = 
\begin{bmatrix}
\frac{1}{2} & \frac{1}{2} & 0 \\
 \frac{1}{2} & -\frac{1}{2} & \frac{1}{2} \\
  0 & \frac{1}{2} & \frac{1}{2}
\end{bmatrix},
\]
representing the feasible solution $y^* = \left(\frac{1}{2}, \frac{1}{2}, \frac{1}{2}, -\frac{1}{2}, \frac{1}{2}, \frac{1}{2}\right)$ of 
\begin{equation}
\max \left\{ \mathds{1}^T y : \> M^T y \le \mathds{1} \right\},\label{eqn:dual}
\end{equation}
which is the dual of \eqref{eqn:bpf}. Since both objective values coincides, they must be optimal.

The Kronecker product of two matrices $A$ and $B$ can be defined as follows:
If matrix $\Ab$ is a $p\times q$ matrix with entries $a_{ij}$ and $\Bb$ is a $r\times s$ matrix with entries $b_{ij}$, then the Kronecker product $\Ab\otimes\Bb$ is defined as:
\[\Ab\otimes\Bb=\begin{bmatrix}a_{11}\Bb&\hdots&a_{1q}\Bb\\\vdots&\ddots&\vdots\\a_{p1}\Bb&\hdots&a_{pq}\Bb\end{bmatrix}\]

Since the Kronecker product of two binary matrices is a binary matrix again, there is also a corresponding bipartite graph and we can transfer the Kronecker product to bipartite graphs. Note that the term Kronecker graph product often refers also to non-bipartite graphs so that the whole adjacency matrix is used to define it and not only the reduced bipartite adjacency matrix as in our case. However, the Kronecker graph product of two bipartite graphs yields again a bipartite graph with two isomorphic connected components, which are isomorphic to the bipartite graph corresponding to the Kronecker product of the two bipartite adjacency matrices. Hence, biclique partition, biclique cover number, and their fractional relaxations are the same. Hence, we always mean the Kronecker product of the bipartite adjacency matrices in this paper. To avoid confusion, the term \emph{weak bipartite product} is used in the literature~\cite{watts2006fractional} with the following definition: Given bipartite graphs $G$ and $H$ with an ordered bi-partitions $(X_G, Y_G)$ and $(X_H, Y_H)$, the weak bipartite product of $G$ and $H$ is the bipartite graph $G\ot  H$ with ordered bi-partition $(X_G\times X_H, Y_G\times Y_H)$, where two vertices $(g_i, h_k)$ and $(g_j, h_l)$ in $G\ot  H$ are adjacent if and only if $g_i$ is adjacent to $g_j$ in $G$ and $h_k$ is adjacent to $h_l$ in $H$. If $A_G$ and $A_H$ are the two bipartite adjacency matrices of bipartite graphs $G$ and $H$, we have that the Kronecker product $A_G\otimes A_H$ is the bipartite adjacency matrix of $G\ot  H$.
\section{The Asymptotic (Fractional) Binary Rank}
It is well known for the real rank that $rank(A \otimes A') = rank(A) \cdot rank(A')$ holds for any real matrices $A$ and $A'$. However, it is a long standing open problem whether the same also holds for the binary rank:
\begin{conjecture}
Given binary matrices $A$ and $A'$, it holds that
\[
rank_{01}(A\otimes A') = rank_{01}(A) \cdot rank_{01}(A').
\]
\end{conjecture}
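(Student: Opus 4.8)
The inequality ``$\le$'' is the easy half and I would dispatch it first. If $A = UV$ and $A' = U'V'$ with $U, V, U', V'$ binary and $r = rank_{01}(A)$, $r' = rank_{01}(A')$, then the mixed-product rule gives $A \otimes A' = (U \otimes U')(V \otimes V')$, and the Kronecker product of binary matrices is binary, so $rank_{01}(A \otimes A') \le r r'$. Combinatorially: the ``product'' of a biclique partition of $G_A$ with one of $G_{A'}$ is a biclique partition of $G_{A\otimes A'}$. All the content is therefore in proving the reverse inequality, i.e.\ that $rank_{01}$ is \emph{super}-multiplicative, and---since the authors will later refute multiplicativity of the fractional relaxation---one should be open to the possibility that the conjecture is in fact false.

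The standard route to a super-multiplicative lower bound is to exhibit a function $\ell$ with $\ell(A) \le rank_{01}(A)$, $\ell(A \otimes A') \ge \ell(A)\,\ell(A')$, and $\ell(A) = rank_{01}(A)$ for every $A$. I would run through the candidates and explain why each, as currently understood, fails the last requirement: (i) the real rank $rank_{\R}$, which is exactly multiplicative and a lower bound but typically far below $rank_{01}$; (ii) eigenvalue / Graham--Pollak-type bounds, which inherit multiplicativity from the spectrum of a Kronecker product but are likewise loose; (iii) the fooling-set (isolating) bound $i(A)$, which is super-multiplicative because a Kronecker product of isolated submatrices is isolated, yet is known to be far from tight; and (iv) the fractional biclique cover number $\bcf(A)$, which is multiplicative and, by a result proven later in this paper, dominates $i(A)$, but only bounds $\bc(A) \le \bp(A) = rank_{01}(A)$ from below. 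So a proof would require either a genuinely new tight multiplicative certificate, or a direct combinatorial ``projection'' argument showing that every biclique partition of $G_{A\otimes A'}$ descends to biclique partitions of $G_A$ and $G_{A'}$ in a way that forces at least $\bp(G_A)\cdot\bp(G_{A'})$ parts. A tempting LP-duality approach---pasting a dual optimum of \eqref{eqn:dual} for $A$ with one for $A'$ into a dual-feasible point for $A\otimes A'$---does not work at the level of $\bpf$ (that is exactly the paper's main finding), so any usable duality argument must exploit integrality in an essential way; equivalently, one is really asking whether unambiguous nondeterministic communication complexity behaves additively under the XOR-type composition induced by the Kronecker product.

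The hard part, and my reason for expecting no quick resolution, is that every tight-looking lower bound above is provably non-multiplicative or provably non-tight, while the $\{0,1\}$-arithmetic underlying $rank_{01}$ (multiplication over the reals, but all partial sums confined to $\{0,1\}$, i.e.\ ``no carries'') carries none of the algebraic structure that makes the real-rank multiplicativity proof or a tensor-rank argument go through. Before committing to a proof I would therefore use the engineered column-generation / integer-programming machinery on the smallest nontrivial instance: compute $\bp(D^{\otimes 2})$ exactly---is it $9$ or strictly smaller?---and, if resources allow, $\bp(D^{\otimes 3})$. A strict drop at any finite Kronecker power disproves the conjecture immediately; persistent equality would at least narrow down which invariant could serve as the right multiplicative certificate. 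I expect the true obstruction to be the combinatorial phenomenon already visible for the Domino: Kronecker powers admit ``mixed'' bicliques that are not products of bicliques of the factors, so a naive projection of an optimal partition of $G_{A\otimes A'}$ over-counts and the super-multiplicative bound slips away.
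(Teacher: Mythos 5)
This statement is a conjecture that the paper explicitly leaves open: the paper proves only the inequality $rank_{01}(A\otimes A') \le rank_{01}(A)\cdot rank_{01}(A')$ via the mixed-product identity $(U\otimes U')(V\otimes V') = UV \otimes U'V'$, exactly as you do, and offers no proof of the reverse direction. Your treatment of the provable half is essentially identical to the paper's, and your assessment that the super-multiplicative direction is the open content (with no known tight multiplicative certificate) is consistent with the paper's own framing, so there is nothing further to compare.
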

To see that $rank_{01}(A\otimes A') \le rank_{01}(A) \cdot rank_{01}(A')$, consider  the decompositions of $A = U\cdot V$ and $A' = U' \cdot V'$ that certify the ranks $r,r'$ of $A, A'$, respectively. Hence, $A \otimes A' = UV \otimes U'V' = (U \otimes U') \cdot (V \otimes V')$ yielding a decomposition of $A \otimes A'$ into $r\cdot r'$ bicliques because the Kronecker product of two bicliques is again a biclique. However, it is not clear whether this upper bound is always tight. The same argument also applies for the other notion of ranks. However, for the boolean rank, it was shown that equality does not always hold~\cite{watts2001boolean,Haviv2021}. In contrast, equality holds for the fractional boolean rank, i.e., fractional biclique cover number~\cite{watts2006fractional}. For the fractional binary rank, equality does not always hold as the following example with $D \otimes D$ shows, where we present a decomposition that covers each edge exactly twice (zeros are represented by dots for better readability): 
\[
\begin{array}{cccc}
\setlength\arraycolsep{3pt}
\begin{bmatrix}
1 & 1 & \cdot & 1 & \cdot & \cdot & \cdot & \cdot & \cdot\\
1 & 1 & \cdot & 1 & \cdot & \cdot & \cdot & \cdot & \cdot\\
\cdot & \cdot & \cdot & \cdot & \cdot & \cdot & \cdot & \cdot & \cdot\\
1 & 1 & \cdot & 1 & \cdot & \cdot & \cdot & \cdot & \cdot\\
1 & 1 & \cdot & 1 & \cdot & \cdot & \cdot & \cdot & \cdot\\
\cdot & \cdot & \cdot & \cdot & \cdot & \cdot & \cdot & \cdot & \cdot\\
\cdot & \cdot & \cdot & \cdot & \cdot & \cdot & \cdot & \cdot & \cdot\\
\cdot & \cdot & \cdot & \cdot & \cdot & \cdot & \cdot & \cdot & \cdot\\
\cdot & \cdot & \cdot & \cdot & \cdot & \cdot & \cdot & \cdot & \cdot\\
\end{bmatrix}
&
\setlength\arraycolsep{3pt}
\begin{bmatrix}
\cdot & \cdot & \cdot & \cdot & \cdot & \cdot & \cdot & \cdot & \cdot\\
\cdot & 1 & 1 & \cdot & 1 & 1 & \cdot & \cdot & \cdot\\
\cdot & 1 & 1 & \cdot & 1 & 1 & \cdot & \cdot & \cdot\\
\cdot & \cdot & \cdot & \cdot & \cdot & \cdot & \cdot & \cdot & \cdot\\
\cdot & 1 & 1 & \cdot & 1 & 1 & \cdot & \cdot & \cdot\\
\cdot & 1 & 1 & \cdot & 1 & 1 & \cdot & \cdot & \cdot\\
\cdot & \cdot & \cdot & \cdot & \cdot & \cdot & \cdot & \cdot & \cdot\\
\cdot & \cdot & \cdot & \cdot & \cdot & \cdot & \cdot & \cdot & \cdot\\
\cdot & \cdot & \cdot & \cdot & \cdot & \cdot & \cdot & \cdot & \cdot\\
\end{bmatrix} 
   &
\setlength\arraycolsep{3pt}
\begin{bmatrix}
1 & \cdot & \cdot & 1 & 1 & \cdot & \cdot & \cdot & \cdot\\
1 & \cdot & \cdot & 1 & 1 & \cdot & \cdot & \cdot & \cdot\\
\cdot & \cdot & \cdot & \cdot & \cdot & \cdot & \cdot & \cdot & \cdot\\
\cdot & \cdot & \cdot & \cdot & \cdot & \cdot & \cdot & \cdot & \cdot\\
\cdot & \cdot & \cdot & \cdot & \cdot & \cdot & \cdot & \cdot & \cdot\\
\cdot & \cdot & \cdot & \cdot & \cdot & \cdot & \cdot & \cdot & \cdot\\
\cdot & \cdot & \cdot & \cdot & \cdot & \cdot & \cdot & \cdot & \cdot\\
\cdot & \cdot & \cdot & \cdot & \cdot & \cdot & \cdot & \cdot & \cdot\\
\cdot & \cdot & \cdot & \cdot & \cdot & \cdot & \cdot & \cdot & \cdot\\
\end{bmatrix} 
&
\setlength\arraycolsep{3pt}
\begin{bmatrix}
\cdot & \cdot & \cdot & \cdot & \cdot & \cdot & \cdot & \cdot & \cdot\\
\cdot & \cdot & 1 & \cdot & \cdot & 1 & \cdot & \cdot & \cdot\\
\cdot & \cdot & 1 & \cdot & \cdot & 1 & \cdot & \cdot & \cdot\\
\cdot & \cdot & \cdot & \cdot & \cdot & \cdot & \cdot & \cdot & \cdot\\
\cdot & \cdot & \cdot & \cdot & \cdot & \cdot & \cdot & \cdot & \cdot\\
\cdot & \cdot & \cdot & \cdot & \cdot & \cdot & \cdot & \cdot & \cdot\\
\cdot & \cdot & \cdot & \cdot & \cdot & \cdot & \cdot & \cdot & \cdot\\
\cdot & \cdot & \cdot & \cdot & \cdot & \cdot & \cdot & \cdot & \cdot\\
\cdot & \cdot & \cdot & \cdot & \cdot & \cdot & \cdot & \cdot & \cdot\\
\end{bmatrix}
\\
\setlength\arraycolsep{3pt}
\begin{bmatrix}
\cdot & \cdot & \cdot & \cdot & \cdot & \cdot & \cdot & \cdot & \cdot\\
\cdot & \cdot & \cdot & \cdot & \cdot & \cdot & \cdot & \cdot & \cdot\\
\cdot & \cdot & \cdot & \cdot & \cdot & \cdot & \cdot & \cdot & \cdot\\
\cdot & \cdot & \cdot & \cdot & \cdot & \cdot & \cdot & \cdot & \cdot\\
\cdot & \cdot & \cdot & 1 & 1 & \cdot & 1 & \cdot & \cdot\\
\cdot & \cdot & \cdot & \cdot & \cdot & \cdot & \cdot & \cdot & \cdot\\
\cdot & \cdot & \cdot & 1 & 1 & \cdot & 1 & \cdot & \cdot\\
\cdot & \cdot & \cdot & 1 & 1 & \cdot & 1 & \cdot & \cdot\\
\cdot & \cdot & \cdot & \cdot & \cdot & \cdot & \cdot & \cdot & \cdot\\
\end{bmatrix}
&
\setlength\arraycolsep{3pt}
\begin{bmatrix}
\cdot & \cdot & \cdot & \cdot & \cdot & \cdot & \cdot & \cdot & \cdot\\
\cdot & \cdot & \cdot & \cdot & \cdot & \cdot & \cdot & \cdot & \cdot\\
\cdot & \cdot & \cdot & \cdot & \cdot & \cdot & \cdot & \cdot & \cdot\\
\cdot & \cdot & \cdot & \cdot & \cdot & \cdot & \cdot & \cdot & \cdot\\
\cdot & \cdot & \cdot & \cdot & \cdot & 1 & \cdot & 1 & 1\\
\cdot & \cdot & \cdot & \cdot & \cdot & 1 & \cdot & 1 & 1\\
\cdot & \cdot & \cdot & \cdot & \cdot & \cdot & \cdot & \cdot & \cdot\\
\cdot & \cdot & \cdot & \cdot & \cdot & 1 & \cdot & 1 & 1\\
\cdot & \cdot & \cdot & \cdot & \cdot & 1 & \cdot & 1 & 1\\
\end{bmatrix}
&
\setlength\arraycolsep{3pt}
\begin{bmatrix}
\cdot & \cdot & \cdot & \cdot & \cdot & \cdot & \cdot & \cdot & \cdot\\
\cdot & \cdot & \cdot & \cdot & \cdot & \cdot & \cdot & \cdot & \cdot\\
\cdot & \cdot & \cdot & \cdot & \cdot & \cdot & \cdot & \cdot & \cdot\\
\cdot & \cdot & \cdot & 1 & \cdot & \cdot & 1 & 1 & \cdot\\
\cdot & \cdot & \cdot & \cdot & \cdot & \cdot & \cdot & \cdot & \cdot\\
\cdot & \cdot & \cdot & \cdot & \cdot & \cdot & \cdot & \cdot & \cdot\\
\cdot & \cdot & \cdot & 1 & \cdot & \cdot & 1 & 1 & \cdot\\
\cdot & \cdot & \cdot & 1 & \cdot & \cdot & 1 & 1 & \cdot\\
\cdot & \cdot & \cdot & \cdot & \cdot & \cdot & \cdot & \cdot & \cdot\\
\end{bmatrix}
&
\setlength\arraycolsep{3pt}
\begin{bmatrix}
\cdot & \cdot & \cdot & \cdot & \cdot & \cdot & \cdot & \cdot & \cdot\\
\cdot & \cdot & \cdot & \cdot & \cdot & \cdot & \cdot & \cdot & \cdot\\
\cdot & \cdot & \cdot & \cdot & \cdot & \cdot & \cdot & \cdot & \cdot\\
\cdot & \cdot & \cdot & \cdot & \cdot & \cdot & \cdot & \cdot & \cdot\\
\cdot & \cdot & \cdot & \cdot & \cdot & \cdot & \cdot & \cdot & \cdot\\
\cdot & \cdot & \cdot & \cdot & \cdot & \cdot & \cdot & \cdot & \cdot\\
\cdot & \cdot & \cdot & \cdot & \cdot & \cdot & \cdot & \cdot & \cdot\\
\cdot & \cdot & \cdot & \cdot & 1 & 1 & \cdot & \cdot & 1\\
\cdot & \cdot & \cdot & \cdot & 1 & 1 & \cdot & \cdot & 1\\
\end{bmatrix}
\\
\setlength\arraycolsep{3pt}
\begin{bmatrix}
\cdot & \cdot & \cdot & \cdot & \cdot & \cdot & \cdot & \cdot & \cdot\\
\cdot & \cdot & \cdot & \cdot & \cdot & \cdot & \cdot & \cdot & \cdot\\
\cdot & \cdot & \cdot & \cdot & \cdot & \cdot & \cdot & \cdot & \cdot\\
1 & \cdot & \cdot & \cdot & \cdot & \cdot & 1 & \cdot & \cdot\\
1 & \cdot & \cdot & \cdot & \cdot & \cdot & 1 & \cdot & \cdot\\
\cdot & \cdot & \cdot & \cdot & \cdot & \cdot & \cdot & \cdot & \cdot\\
\cdot & \cdot & \cdot & \cdot & \cdot & \cdot & \cdot & \cdot & \cdot\\
\cdot & \cdot & \cdot & \cdot & \cdot & \cdot & \cdot & \cdot & \cdot\\
\cdot & \cdot & \cdot & \cdot & \cdot & \cdot & \cdot & \cdot & \cdot\\
\end{bmatrix}
&
\setlength\arraycolsep{3pt}
\begin{bmatrix}
\cdot & \cdot & \cdot & \cdot & \cdot & \cdot & \cdot & \cdot & \cdot\\
\cdot & \cdot & \cdot & \cdot & \cdot & \cdot & \cdot & \cdot & \cdot\\
\cdot & \cdot & \cdot & \cdot & \cdot & \cdot & \cdot & \cdot & \cdot\\
\cdot & \cdot & \cdot & \cdot & \cdot & \cdot & \cdot & \cdot & \cdot\\
\cdot & \cdot & 1 & \cdot & \cdot & \cdot & \cdot & 1 & 1\\
\cdot & \cdot & 1 & \cdot & \cdot & \cdot & \cdot & 1 & 1\\
\cdot & \cdot & \cdot & \cdot & \cdot & \cdot & \cdot & \cdot & \cdot\\
\cdot & \cdot & \cdot & \cdot & \cdot & \cdot & \cdot & \cdot & \cdot\\
\cdot & \cdot & \cdot & \cdot & \cdot & \cdot & \cdot & \cdot & \cdot\\
\end{bmatrix}
&
\setlength\arraycolsep{3pt}
\begin{bmatrix}
\cdot & 1 & \cdot & \cdot & 1 & \cdot & \cdot & \cdot & \cdot\\
\cdot & \cdot & \cdot & \cdot & \cdot & \cdot & \cdot & \cdot & \cdot\\
\cdot & 1 & \cdot & \cdot & 1 & \cdot & \cdot & \cdot & \cdot\\
\cdot & 1 & \cdot & \cdot & 1 & \cdot & \cdot & \cdot & \cdot\\
\cdot & \cdot & \cdot & \cdot & \cdot & \cdot & \cdot & \cdot & \cdot\\
\cdot & 1 & \cdot & \cdot & 1 & \cdot & \cdot & \cdot & \cdot\\
\cdot & \cdot & \cdot & \cdot & \cdot & \cdot & \cdot & \cdot & \cdot\\
\cdot & \cdot & \cdot & \cdot & \cdot & \cdot & \cdot & \cdot & \cdot\\
\cdot & \cdot & \cdot & \cdot & \cdot & \cdot & \cdot & \cdot & \cdot\\
\end{bmatrix}
&
\setlength\arraycolsep{3pt}
\begin{bmatrix}
\cdot & \cdot & \cdot & \cdot & \cdot & \cdot & \cdot & \cdot & \cdot\\
\cdot & \cdot & \cdot & \cdot & \cdot & \cdot & \cdot & \cdot & \cdot\\
\cdot & \cdot & \cdot & \cdot & \cdot & \cdot & \cdot & \cdot & \cdot\\
\cdot & \cdot & \cdot & \cdot & 1 & \cdot & \cdot & 1 & \cdot\\
\cdot & \cdot & \cdot & \cdot & \cdot & \cdot & \cdot & \cdot & \cdot\\
\cdot & \cdot & \cdot & \cdot & \cdot & \cdot & \cdot & \cdot & \cdot\\
\cdot & \cdot & \cdot & \cdot & 1 & \cdot & \cdot & 1 & \cdot\\
\cdot & \cdot & \cdot & \cdot & \cdot & \cdot & \cdot & \cdot & \cdot\\
\cdot & \cdot & \cdot & \cdot & 1 & \cdot & \cdot & 1 & \cdot\\
\end{bmatrix} 
\end{array}
\]
Hence we take each of these 12 matrices with a weight of $0.5$ and obtain that $\bpf(D\otimes D) \le 6 < 6.25 = \bpf^2(D)$, the square of the fractional biclique partition number of the Domino graph. Moreover, this decomposition is also optimal because there is a dual feasible solution certifying that $\bpf(D\ot D) \ge 6$. It is interesting to note that the degeneracy of the optimum solution increases from $D$ to $D \ot D$: while we have $7$ edges in $D$ and an optimum partition with $5$ bicliques, we have $49$ edges in $D \ot D$ and an optimum partition with $12$ bicliques. We will get back to this observation in Section~\ref{sec:upperbound}.

For now, it is more important that this example rules out that asymptotic fractional binary rank and fractional binary rank always coincide like asymptotic fractional boolean rank and fractional boolean rank always do. It remains the question what value has the asymptotic fractional binary rank of the Domino and does this give us a hint for its general behavior? We can only partially answer these questions as follows.

\begin{theorem}\label{thm:main}
    The asymptotic fractional binary rank, likewise the asymptotic fractional biclique partition number, of the Domino is at least $2$ and at most $2.372713$, i.e.,
    \[
    \bpf^\infty(D) \in [2,2.372713].
    \]
    In general, we have \[i(A) \le \bcf(A) \le \bpf^\infty(A) \le \bpf(A) \le \bp(A),\] i.e., the asymptotic fractional binary rank is sandwiched between the fractional boolean rank and the fractional bipartite rank, which are in turn sandwiched between the isolation-set number and the binary rank.
\end{theorem}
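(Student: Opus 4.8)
The plan is to prove the general chain from the outside in, using a soft argument based on Fekete's Lemma and known multiplicativity facts, and then to read off the Domino bounds --- the upper one being the only place where the column generation engineered in this paper is actually needed.

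\textbf{Lower half of the chain.} First, $i(A)\le\bcf(A)$ is exactly the isolating (fooling) set bound for the fractional biclique cover number, which I take from~\cite{watts2006fractional}. For the key inequality $\bcf(A)\le\bpf^\infty(A)$ I would combine two ingredients: (i) any point feasible for the partition LP~\eqref{eqn:bpf} is feasible for the cover LP~\eqref{eqn:bcf}, since $Mx=\one$ implies $Mx\ge\one$; hence $\bcf(B)\le\bpf(B)$ for every binary matrix $B$; and (ii) the fractional biclique cover number is multiplicative with respect to the Kronecker product~\cite{watts2006fractional}, i.e.\ $\bcf(A^{\otimes k})=\bcf(A)^k$. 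Together these give $\bcf(A)^k=\bcf(A^{\otimes k})\le\bpf(A^{\otimes k})$ for every $k$, and taking $k$-th roots and letting $k\to\infty$ yields $\bcf(A)\le\bpf^\infty(A)$.

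\textbf{Upper half of the chain.} For $\bpf^\infty(A)\le\bpf(A)$ I would first check that $a_k:=\bpf(A^{\otimes k})$ is submultiplicative: given fractional decompositions $A^{\otimes k}=U\Lambda V$ and $A^{\otimes\ell}=U'\Lambda'V'$, the product $(U\otimes U')(\Lambda\otimes\Lambda')(V\otimes V')$ is a valid fractional decomposition of $A^{\otimes(k+\ell)}$ with $tr(\Lambda\otimes\Lambda')=tr(\Lambda)\,tr(\Lambda')$, exactly mirroring the argument already given for $rank_{01}$. Hence Fekete's Lemma applies, $\bpf^\infty(A)=\inf_k a_k^{1/k}$, so in particular $\bpf^\infty(A)\le a_1=\bpf(A)$ and, more usefully, $\bpf^\infty(A)\le a_k^{1/k}$ for every fixed $k$. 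The last inequality $\bpf(A)\le\bp(A)$ is just LP relaxation of~\eqref{eqn:bp}.

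\textbf{The Domino.} The lower bound is now immediate: $D$ has an isolating submatrix with two ones --- e.g.\ the entries $(1,1)$ and $(3,3)$, since any all-ones submatrix containing both would have to include rows $\{1,3\}$ and columns $\{1,3\}$ but $D_{13}=D_{31}=0$ --- so $i(D)=2$; equivalently, two of the maximal bicliques listed for $D$ already cover all seven edges, so $\bcf(D)=2$. Either way, $\bpf^\infty(D)\ge 2$. For the upper bound it suffices, by $\bpf^\infty(D)\le a_k^{1/k}$, to produce for one well-chosen power $k$ a single feasible solution of the fractional biclique partition LP of $D^{\otimes k}$ whose objective value is at most $2.372713^{\,k}$. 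I expect this to be the only real obstacle: the number of bicliques of $D^{\otimes k}$ grows doubly exponentially, so the LP cannot be written down explicitly, and the required feasible solution is obtained only via the engineered column generation --- with inductive warm-starting from lower powers and active-set management to stay within memory limits --- that occupies the rest of the paper. Once such a $k$ and such a solution are in hand, the full chain of inequalities follows purely from Fekete's Lemma together with the already-established multiplicativity of the fractional boolean rank.
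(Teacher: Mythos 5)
Your proposal is correct, and most of it coincides with the paper: the submultiplicativity of $\bpf$ under the Kronecker product, Fekete's Lemma giving $\bpf^\infty(A)=\inf_k\sqrt[k]{\bpf(A^{\otimes k})}\le\bpf(A)$, the LP-relaxation inequalities $i(A)\le\bcf(A)$ and $\bpf(A)\le\bp(A)$, the fooling set $\{(1,1),(3,3)\}$ giving $\bcf(D)=2$, and the reduction of the Domino upper bound to exhibiting one feasible fractional partition of $D^{\otimes k}$ of value at most $2.372713^k$ (the paper takes $k=5$ with optimum $75.201302$). The one place where you genuinely diverge is the key inequality $\bcf(A)\le\bpf^\infty(A)$: you derive it by combining $\bcf\le\bpf$ with the black-box multiplicativity $\bcf(A^{\otimes k})=\bcf(A)^k$ from \cite{watts2006fractional}. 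The paper explicitly notes in Section~\ref{sec:lowerbound} that this shortcut works, but deliberately avoids it and instead proves the stronger Lemma~\ref{lem:lower}, namely $\bpf(A\ot A')\ge\max\{\bcf(A)\cdot\bpf(A'),\,\bpf(A)\cdot\bcf(A')\}$, via a projection argument that maps each biclique of $G_{A\ot A'}$ to a biclique of $G_A$ and redistributes weights. Applied with $A'=A^{\otimes(k-1)}$ this yields the quantitative lower bound $\sqrt[k]{\bpf(A^{\otimes k})}\ge\bcf(A)\cdot\sqrt[k]{\bpf(A)/\bcf(A)}$, i.e.\ $2\cdot(5/4)^{1/k}$ for the Domino, which is what lets the authors discuss how fast the sequence could converge and compare it against their computed upper bounds. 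Your route is shorter and suffices for the theorem as stated; the paper's route buys the convergence-rate information at the cost of reproving (a strengthening of) the multiplicativity ingredient from first principles.
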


For all inequalities in the above chain except between $\bcf$ and $\bpf^\infty$, there are known examples where the inequality is strict. The Domino serves as an example for strictness of the last two inequalities as seen before and the Crown graph $\overline{I}_5$, i.e., a $5 \times 5$ all-ones matrix minus the identity, has $i(\overline{I}_5) = 3 < 10/3 = \bcf(\overline{I}_5)$ per~\cite{watts2006fractional}.

For us, it remains to prove the lower bound $\bcf(A) \le \bpf^\infty(A)$ in Section~\ref{sec:lowerbound} by showing a stronger lower bound for the fractional binary rank for Kronecker products. Furthermore, we obtain the numerical upper bound for the Domino by algorithm engineering as discussed in Section~\ref{sec:upperbound} based on the observation that, according to Fekete's Lemma, sub-multiplicativity not only implies the existence of the asymptotic fractional binary rank but also yields the following characterization:
\[
\bpf^\infty(A) := \lim_{k \to \infty} \sqrt[k]{\bpf\left(A^{\otimes k}\right)} = \inf \left\{\sqrt[k]{\bpf\left(A^{\otimes k}\right)}:\> k \in \mathbb{Z}_{>0} \right\}.
\]
Hence, any feasible fractional biclique partition of $D^{\ot k}$ yields an upper bound on $\bpf^\infty(D)$. We present our approach that allowed us to compute $\bpf(D^{\ot k}$ for $k$ up to $5$ in Section~\ref{sec:upperbound} and the obtained numerical upper bounds in Subsection~\ref{subsec:experiments}.

\section{Proof of the Lower Bound}\label{sec:lowerbound}

To prove the lower bound, we could simply exploit that $\bpf$ is lower bounded by $\bcf$ and the multiplicity of $\bcf$ with respect to the Kronecker product (cf.\ Theorem 5.3 in \cite{watts2006fractional}). However, such a proof does not provide much insight into the convergence of $\sqrt[k]{\bpf\left(A^{\otimes k}\right)}$ when $k$ goes to infinity. We therefore prove the following stronger Lemma for lower bounding the fractional biclique partition number of a Kronecker power.

\begin{lemma}\label{lem:lower}
For arbitrary binary matrices $A$ and $A'$, we have
\[\bpf(A\ot A')\geq \max\left\{\bcf(A)\cdot \bpf(A'),\> \bpf(A)\cdot \bcf(A') \right\}.\]
\end{lemma}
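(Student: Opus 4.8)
To prove Lemma~\ref{lem:lower}, the plan is to argue through the primal linear programs~\eqref{eqn:bpf} and~\eqref{eqn:bcf} together with the symmetry of the Kronecker product. Since $A\ot A'$ and $A'\ot A$ differ only by a permutation of rows and columns, they have the same fractional biclique partition number, so it suffices to establish the single inequality
\[
\bpf(A\ot A')\ \ge\ \bpf(A)\cdot\bcf(A');
\]
interchanging the roles of $A$ and $A'$ then yields the other term of the maximum.

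The first step is to record the structure of an arbitrary biclique $C$ of $G_{A\ot A'}$. It consists of a set $S$ of vertices on one side and a set $T$ of vertices on the other, with all edges between them present; writing each vertex of $S$ as a pair $(i,k)$ and each vertex of $T$ as a pair $(j,l)$, two such vertices are adjacent in $G_{A\ot A'}$ precisely when $A_{ij}=A'_{kl}=1$. Set $S_k:=\{i:(i,k)\in S\}$ and $T_l:=\{j:(j,l)\in T\}$, and let $K:=\{k:S_k\neq\emptyset\}$, $L:=\{l:T_l\neq\emptyset\}$. Unwinding the definition $(A\ot A')_{(i,k),(j,l)}=A_{ij}A'_{kl}$ gives three facts: (i)~$B'(C):=K\times L$ is a biclique of $G_{A'}$; (ii)~for every edge $f=(k,l)$ of $B'(C)$ the set $C_f:=S_k\times T_l$ is a nonempty biclique of $G_A$; and (iii)~writing an edge of $G_{A\ot A'}$ as a pair $(e,f)$ of an edge $e$ of $G_A$ and an edge $f$ of $G_{A'}$, we have $(e,f)\in C$ iff $f\in B'(C)$ and $e\in C_f$.

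The main step is to take an optimal solution $x=(x_C)$ of~\eqref{eqn:bpf} for the matrix $A\ot A'$, so $\sum_C x_C=\bpf(A\ot A')$ and every edge of $G_{A\ot A'}$ has total weight exactly $1$. Fix an edge $f=(k,l)$ of $G_{A'}$. For every edge $e$ of $G_A$ the pair $(e,f)$ is an edge of $G_{A\ot A'}$, so by~(iii),
\[
\sum_{C:\,f\in B'(C)} x_C\,[e\in C_f]\ =\ \sum_{C:\,(e,f)\in C} x_C\ =\ 1 .
\]
Thus assigning to each biclique $\beta$ of $G_A$ the weight $\sum_{C:\,f\in B'(C),\,C_f=\beta} x_C$ produces a feasible point of~\eqref{eqn:bpf} for $A$ whose objective is $\sum_{C:\,f\in B'(C)} x_C$; hence $\sum_{C:\,f\in B'(C)} x_C\ge\bpf(A)$ for every edge $f$ of $G_{A'}$. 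Dividing by $\bpf(A)$, assigning to each biclique $\gamma$ of $G_{A'}$ the weight $\sum_{C:\,B'(C)=\gamma} x_C/\bpf(A)$ produces a feasible point of~\eqref{eqn:bcf} for $A'$ (each edge covered with weight $\ge1$) whose objective is $\bpf(A\ot A')/\bpf(A)$; therefore $\bcf(A')\le\bpf(A\ot A')/\bpf(A)$, which rearranges to the claimed bound.

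The only delicate point I anticipate is the structural step~(i)--(iii): the naive hope that $C$ factors as a product of a biclique of $G_A$ with one of $G_{A'}$ is false, and what one actually has is that $C$ is contained in the product of some biclique of $G_A$ with the biclique $B'(C)$, with each slice $C_f$ an honest biclique of $G_A$. Verifying this is just reading off entries of $A\ot A'$, but the bookkeeping needs care — in particular, different bicliques $C$ may yield the same slice $C_f$ or the same projection $B'(C)$, which is harmless since we only ever add their weights. It is also worth noting where the asymmetry between $\bpf$ and $\bcf$ comes in: within a single slice the weights still cover every edge of $G_A$ with weight exactly $1$, an honest fractional \emph{partition} (the source of the factor $\bpf(A)$), whereas across slices an edge of $G_{A'}$ may be over-covered, so after normalization only a fractional \emph{cover} of $G_{A'}$ is guaranteed (the source of the factor $\bcf(A')$) — which is exactly why one cannot hope for $\bpf$ on both factors, consistent with the $D\ot D$ example.
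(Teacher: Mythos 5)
Your proof is correct and is essentially the paper's own argument with the roles of $A$ and $A'$ swapped: both take an optimal fractional partition of $G_{A\ot A'}$, project each biclique onto one factor to obtain a fractional \emph{cover} there, and normalize by the optimal \emph{partition} value of the other factor, which is extracted from the restrictions of the bicliques to a fixed edge (your slices $C_f$; the paper's blocks $a_{ij}A'$). Your write-up is somewhat more explicit than the paper's in verifying the structural facts (i)--(iii) that the projection and the slices are genuine bicliques, but the underlying idea is the same.
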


\begin{proof}
To prove this, we adopt the strategy from \cite{Lavynska2018}. We consider $\mathfrak{B}_{A\ot A'}$, $\mathfrak{B}_A$ and $\mathfrak{B}_{A'}$ as the sets of all bicliques in $G_{A\ot A'}$, $G_A$ and $G_{A'}$, respectively. 

We first prove that $\bpf(A\ot A')\geq \bcf(A)\cdot \bpf(A')$.
Since $\bpf$ is always positive, we can equivalently schow that 
\[
\bcf(A)\leq \frac{\bpf(A\ot A')}{\bpf(A')}.
\]
To prove this, let $w_{A\ot A'}$ be a weight assignment for bicliques from $\mathfrak{B}_{A\ot A'}$ that corresponds to the optimal fractional biclique partition of the graph $G_{A\ot A'}$ and the fractional biclique partition number $\bpf(A\ot A')$. For each biclique $B_{A\ot A'}\in \mathfrak{B}_{A\ot A'}$, the projection of $B_{A\ot A'}$ on $G_A$ is denoted by $\pi(B_{A\ot A'})$, which contains exactly those edges corresponding to $a_{ij} \neq 0$ of $A$, where at least one edge from the block $a_{ij}A'$ is contained in $B_{A\ot A'}$. From this definition of the projection, it follows that $\pi(B_{A\ot A'})$ is a biclique in $G_A$ for each $B_{A\ot A'}\in \mathfrak{B}_{A\ot A'}$.

Moreover, each biclique $B_A \in \mathfrak{B}_{A}$ collects the weight from all the bicliques that project to it, i.e., we set
\[
w_A(B_A):=\frac{1}{\bpf(A')}\left[\sum_{B_{A\ot A'}\in \pi^{-1}(B_A)}w_{A\ot A'}(B_{A\ot A'})\right]
\]
for all bicliques $B_A$ of $G_A$, where $\pi^{-1}(B_A) := \left\{B_{A\ot A'}\in \mathfrak{B}_{A\ot A'}: \pi(B_{A\ot A'})=B_A\right\}$ is the preimage of $B_A$ under $\pi$.

Thus, the weight of $B_A$ is an accumulated weight of all bicliques, projected on $B_A$ divided by $\bpf(A')$. The normalization with $\bpf(A')$ is chosen such that these weights form a fractional biclique cover for $G_A$. To see this, consider an arbitrary edge $e$ of $G_A$ corresponding to some $a_{ij} \neq 0$ of $A$. The intersection of the bicliques of $G_{A\ot A'}$ with the block $a_{ij}A'$ yields a feasible fractional biclique partition of $G_{A'}$ because the edges of $a_{ij}A'$ are fractional decomposed by those bicliques in $\mathfrak{B}_{A\ot A'}$ with a non-empty intersection with the block $a_{ij}A'$ and their corresponding weights as they come from a fractional biclique partition of the whole of $G_{A\ot A'}$. Hence, the sum of the weights of all bicliques intersecting the block $a_{i,j}A'$ is at least the optimum $\bpf(A')$.
Thus,
\begin{align*}
\bpf(A') &\le \sum_{B_{A\ot A'} \cap a_{ij}A'  \ne \emptyset} w_{A\ot A'}(B_{A\ot A'}) = \sum_{B_A: e \in B_A}\left[\sum_{B_{A\ot A'}\in \pi^{-1}(B_A)}w_{A\ot A'}(B_{A\ot A'})\right]
\end{align*}
Since each edge $e$ from $G_A$ corresponds to some $a_{ij}\neq 0$ of $A$, the sum of weights $w_A(B_A)$ from our construction over all bicliques containing $e$ of $G_A$ is at least $1$:
\begin{align*}
    \sum_{B_A:e\in B_A}w_A(B_A) &=\sum_{B_A:e\in B_A}\frac{1}{\bpf(A')}\left[\sum_{B_{A\ot A'}\in \pi^{-1}(B_A)}w_{A\ot A'}(B_{A\ot A'})\right] \\
    &\geq \frac{1}{\bpf(A')}\cdot \bpf(A')=1
\end{align*}
Hence, the assignment $w_A$ is a feasible fractional biclique cover of $G_A$.
This means that its objective value is at least the optimum, i.e.,
\begin{align*}
\bcf(A)&\leq \sum_{B_A\in \mathfrak{B}_A}w_A(B_A)=\sum_{B_A\in G_A}\frac{1}{\bpf(A')}\left[\sum_{B_{A\ot A'}\in \pi^{-1}(B_A)}w_{A\ot A'}(B_{A\ot A'})\right] \\
&=\frac{1}{\bpf(A')}\sum_{B_{A\ot A'}\in \mathfrak{B}_{A\ot A'}}w_{A\ot A'}(B_{A\ot A'}) =\frac{1}{\bpf(A')}\cdot \bpf(A\ot A'),
\end{align*}
which proves the claim. Observe that the role of $A$ and $A'$ are arbitrary, i.e., it also holds that $\bcf(A\ot A') \le \bpf(A)\cdot\bcf(A')$ by an analogous argument although the Kronecker product does not commute in general.
\end{proof}

Consider Lemma~\ref{lem:lower} with $A'=A^{\ot (k-1)}$. Then, we have
\[
\bpf(A^{\ot k}) = \bpf\left(A \ot A^{\ot (k-1)}\right)\geq \bpf(A) \cdot \bcf\left(A^{(k-1)}\right) = \bpf(A) \cdot \bcf^{k-1}(A)
\]
Hence,
\[
\bpf(A^{\otimes k})\geq \frac{\bpf(A)}{\bcf(A)} \cdot \bcf^k(A),
\]
and taking the $k^{\text{th}}$ root yields
\[
\sqrt[k]{\bpf(A^{\otimes k})} \geq \bcf(A)\cdot\sqrt[k]{\frac{\bpf(A)}{\bcf(A)}}.
\]

For the Domino with bipartite adjacency matrix $D$, we have the biclique cover number and the fractional biclique cover number as $2$ and the fractional biclique partition number as $2.5$. We have that the $k^{\text{th}}$ root of the fractional biclique number of the $k^{\text{th}}$ Kronecker power of $D$ as:
\[
\sqrt[k]{\bpf(A^{\otimes k})} \geq 2\cdot\sqrt[k]{\frac{2.5}{2}} = 2\cdot\left(\frac{5}{4}\right)^{1/k}.
\]
As $k$ becomes bigger, the lower bound for $\sqrt[k]{\bpf(A^{\otimes k})}$ approaches $2$, the fractional biclique cover number of $D$.

Note that Fekete's lemma does not say that the sequence $\sqrt[k]{\bpf(A^{\otimes k})}$ is monotonically decreasing - what it guarantees is that the sequence converges, and the limit of the sequence is the infimum of the sequence. The first values of the lower bound are approximately
\begin{center}
    \begin{tabular}{c|ccccc}
$k$ & 1 & 2 & 3 & 4 & 5\\ \hline
$2\cdot\left(\frac{5}{4}\right)^{1/k}$ & 2.5 & 2.236 & 2.154 & 2.115 & 2.091
\end{tabular}
\end{center}

Since the lower bound converges rather quickly to $2$, this gives some hope that we could get a good hint whether also $\sqrt[k]{\bpf(D^{\otimes k})}$ converges to $2$ by computing the fractional biclique partition numbers for the first few values for $k$, where every further data point may help us to narrow the gap by getting tighter bounds as we discuss in the next section.

\section{Algorithm for the Upper Bound}\label{sec:upperbound}

To compute the upper bound, we iteratively solve the linear optimization problem~\eqref{eqn:bpf} for $D^{\otimes k}$ and increasing $k$. We managed to solve it until $k=5$ on a machine with 256 GB RAM by the engineering described in this section. To this end, we used a Column Generation approach (except for the base case $k=1$ that is small enough to enumerate all bicliques explicitly). Observe that the number of edges grows as $7^k$ and the number of maximal bicliques grows as $4^k$. Since one of these maximal bicliques is of size $2^k \times 2^k$, we obtain $(2^{2^k}-1)\cdot(2^{2^k}-1)$ subbicliques already from this biclique, thus, the total number of bicliques is in $\Omega(4^{2^k})$. However, basic linear optimization theory tells us that there is always an optimum solution where the number of contributing bicliques is at most the number of edges. Moreover, we have already seen for $D$ and $D \otimes D$ that there are degenerate optimum solutions using even fewer bicliques. This means that in the end, we might be lucky to enumerate only a small fraction of all existing bicliques. This is where Column Generation comes into play. We refer the interested reader to~\cite{Luebbecke2011} to dive deeper into this topic, while we restrict ourselves to our adaptions of the basic idea of the general approach, which is as follows.
\begin{enumerate}
    \item Start with a small initial set of bicliques in the columns of $M$.
    \item \label{master}Solve the corresponding restricted LP (the so-called \emph{master problem}).
    \item Solve the so-called pricing problem to obtain new columns with negative reduced cost.
    \item\label{update} Stop if no such columns exist; otherwise update $M$ and start over with Step~\ref{master}.
\end{enumerate}

We discuss the details of these steps in the following subsections. Since we use a commercial solver for Step~\ref{master}, we include the corresponding details in Subsection~\ref{subsec:experiments}. Updating $M$ in Step~\ref{update} is discussed in Subsection~\ref{subsec:pruning} as it was also necessary to prune columns again to manage the size of the master problem.

\subsection{Initial Bicliques}\label{subsec:initial}

In principle, one could start with an empty set of bicliques and generate all the bicliques on the fly. However, this is not advisable as there would be little to no guidance for the pricing problem. Moreover, it simplifies the pricing problem if we assume that the initial set of bicliques allows for a feasible solution of the master problem.\footnote{This is not a necessary condition, but it greatly simplifies the exposition of the pricing problem in Subsection~\ref{subsec:pricing}.}

A trivial set of bicliques that guarantees feasibility are the individual edges. However, the resulting (fractional) biclique partition number would we far larger than the optimum. On the other hand, we consider a fixed side of the bipartite graph and combine all edges that are incident to the same node in a star biclique. This also yields a biclique partition. Note that these bicliques correspond to either the rows or the columns of the given matrix representation.

For small graphs, like the Domino, we can even afford to enumerate all bicliques. 

For Kronecker powers, we can generate good initial sets of bicliques inductively. Given a feasible set of bicliques $\mathcal{B}_1$ for $\bpf(A)$ and a feasible set of bicliques $\mathcal{B}_k$ for $\bpf(A^{\otimes k})$, we obtain a feasible set of bicliques $\mathcal{B}_{k+1} := \left\{ B \otimes B': B \in \mathcal{B}_1 \, \wedge \, B' \in \mathcal{B}_k\right\}$ for $\bpf(A^{\otimes (k+1)})$.

In particular, if we use the sets of bicliques that have a positive weight in the optimum solutions for $\bpf(A)$ and $\bpf(A^{\otimes k})$, respectively, we obtain not only a feasible initial set of bicliques for the initial master problem, it also implies that its optimum objective value is at most $\bpf(A)\cdot \bpf(A^{\otimes k})$, which is likely not optimal, but a good educated guess. The lower the initial objective value of the initial master problem, the fewer iterations will typically be necessary to find the optimum.

\subsection{Solving the Pricing Problem}\label{subsec:pricing}

The goal of the pricing problem is to identify one or more bicliques that such that adding them to the master problem will improve its current objective value. As described in Subsection~\ref{subsec:initial}, we may assume to start with a feasible initial master problem. One could also start with an infeasible one and use Farkas pricing instead~\cite{Luebbecke2011}. Hence, the master problem and its dual have finite optimum solutions and their objective values coincide. For the pricing problem, we use the optimum dual solution represented by a vector $y^*$ with an entry for each edge of the graph under consideration. Recall that some of these values might also be negative.

Before solving the first pricing problem, we compute the set of inclusion-wise maximal bicliques. To obtain this, we can use the MBEA algorithm from \cite{Zhang2014}. If we are given a Kronecker power $A^{\otimes k}$, it suffices to run that algorithm on $A$ and then take the $k$-th Kronecker powers of the inclusion-wise maximal bicluqes of $A$.

Let $\mathsf{B_1},\mathsf{B_2},\dots$ be the set of inclusion-wise maximal bicliques.
We iterate over all these bicliques, where we consider at Step $t$ the biclique $\mathsf{B_t}$.

We then generate new bicliques similar to the technique used in finding the submatrix with largest sum~\cite{derval2022maximal, yampc2021submatrix}. The submatrix with largest sum problem involves finding a submatrix of a data matrix $y^*_{i,j}$ such that the sum of the elements is maximized. A submatrix can include contiguous or non-contiguous rows and columns. This problem can be formalized as 
\[\max \left\{ \sum_{i,j} y^*_{i,j}\cdot r_i\cdot c_j: \> r_i,c_j\in\{0,1\} \right\}, \]
which is a quadratic unconstrained boolean optimization (QUBO) problem.
Such problems can be linearized through introducing variable $s_{i,j}\in\{0,1\}$. The new problem is:
$$\max\sum_{i,j} y^*_{i,j}\cdot s_{i,j}$$
$$s_{i,j}\leq r_i \forall i,j$$
$$s_{i,j}\leq c_j \forall i,j$$
$$s_{i,j}\geq r_i + c_j -1 \forall i,j$$
$$r_i,c_j\in\{0,1\}, s_{i,j}\in \{0,1\}$$
For every maximal biclique in the set of maximal bicliques, we formulate a new  integer linear program (ILP). All these ILPs can be solved independently in parallel, which we did on a machine with 48 cores and 256 GB RAM.

First, we only collected the optimum solutions for all inclusion-wise maximal bicliques in a pool provided that the respective objective value exceeds $1+\varepsilon$ for numerical reasons\footnote{We chose $\varepsilon = 10^{-6}$.}. Moreover, we discarded duplicates.

Later, when we established pruning (see Sec.~\ref{subsec:pruning}), we could afford to be more aggressive in the collection of new bicliques, so we included all bicliques from Gurobi's solution pool whose objective value exceeded the threshold into our biclique pool.

We compare the result of this new linear program (maximizing the sum of the dual variable values) to the thresholds ($1+\epsilon)$ if the primal solution is feasible, and $\epsilon$ if the primal solution is infeasible), where $\epsilon$ is an arbitrarily small quantity, e.g., $10^{-6}$. If the result is greater than the threshold, we can improve our solution by adding a new biclique. The Boolean variables corresponding to the edges that are set to True in this linear programming solution are guaranteed to form a biclique as well.

All bicliques form our biclique pool were added to the master problem to start over with the next iteration of the column generation process. If the pool is empty, we are done and we have found the last objective value of the master problem is the fractional biclique partition number. To be more precise, it is better than a $(1+\varepsilon)$-approximation since the dual solution scaled down by $1+\varepsilon$ is feasible and hence the duality gap is at most $1+\varepsilon$.

In fact, we record the maximum objective value, say $\alpha$, over all Integer Linear Programs for each inclusion-wise maximal bicliques. By scaling the optimum objective value of the master problem down by $\alpha$, we obtain a lower bound for the fractional biclique partition number. We kept track of the maximum lower bound over the iterations of the column generation process to monitor the progress.

\subsection{Pruning}\label{subsec:pruning}

To ensure that the number of bicliques does not blow up too fast when we 
It is possible that the procedure for generating new bicliques generates a lot of excess bicliques that do not have significant weights. Hence, we also track, for each biclique, the number of times its corresponding entry after calculating the dual solution $M^T y\leq \mathds{1}$ is slack, i.e., is less than $1$. If this occurs for more than $3$ iterations (a heuristic we choose), we prune this biclique from the set of bicliques used.

This is analogous to the pivoting decisions made in the simplex algorithm, where we consider the reduced cost of the variable to determine whether we pivot that variable to the basis. If the reduced cost is positive for a non-basic variable, adding the variable does not help the dual problem (a maximization problem). Instead of deleting all such bicliques, we allow the counter to go to $3$ before we delete the column corresponding to that biclique from our matrix $M$. This is to manage the size of the master problem so that it still fits into main memory and computing its optimum solution is not slowed down by surplus variables that would not be used in an optimum solution anyway.  Note that there is no guarantee that a biclique one pruned will not be regenerated later - the heuristic merely helps speed up the algorithm.

\subsection{Experimental Results}\label{subsec:experiments}

We ran our experiments on a compute server with 24 physical cores and 48 logical Intel(R) Xeon(R) CPU E5-2680 v3 processors at 2.50GHz (instruction set [SSE2|AVX|AVX2]) and 256 GB RAM. Our code is Open Source (GNU General Public License) and publicly available.\footnote{\url{https://github.com/AngikarGhosal/FractionalBicliquePartitionBinaryRank}}
However, it relies on Gurobi\footnote{\url{https://www.gurobi.com/}} as solver for the master problem and the sub-problems for the pricing problem. Although, it is a commercial solver, there is are academic licenses. In principle, it could be replaced by any other solver for (integer) linear optimization problems, potentially effecting the performance. We ran our experiments with version 10.0.0 of Gurobi and it used up to 24 threads. It used its concurrent optimizer, but the barrier optimizer (including crossover) was always the fastest.

We calculate the fractional biclique partition numbers (rounded to six decimal digits) of the first $5$ Kronecker powers of the Domino:
\begin{center}
    \begin{tabular}{c|ccccc}
$k$ & 1 & 2 & 3 & 4 & 5\\ \hline
$\bpf(D^{\otimes k})$ & 2.5 & 6 & 13.818792 & 32.040389 & 75.201302 \\
$\sqrt[k]{\bpf(D^{\otimes k})}$ & 2.5 & 2.449490 & 2.399699 & 2.379164 & 2.372712 \\
\end{tabular}
\end{center}

To obtain the optimum value for $k = 5$, we had to complete $1456$ iterations of Column Generation. Before we implemented the pruning mechanism, one iteration for $k=5$ took about $1.5$ hours, afterwards it dropped to about $50$min on average. This allowed us to be more aggressive when collecting new bicliques: instead of taking only the maximum-weight subbiclique for each maximal biclique, we then took all subbicliques exceeding the threshold that entered Gurobis solution pool. This significantly improved the progress of the objective value of the master problem. In the last $20$ iterations, solving the master problem became again very difficult due the crossover, which took up to $8$h for the final iteration after the barrier was already done in about $5$min.
The theory of linear optimization tells us that the fractional biclique partition number must be a rational number. A continued fraction approximation for the computed value of $\bpf{D^{\otimes 3}}$ suggests that it could be $2059/149$, however, we do not have a combinatorial interpretation for this ratio.

On the other hand, the upper bounds do not decrease as quickly as the lower bounds from~\ref{sec:lowerbound}, in particular, the last step from $k=4$ to $k=5$ only changed the third decimal digit of $\sqrt[k]{\bpf(D^{\otimes k})}$. This does not rule out a convergence to $2$, but it comes as a surprise, which asks for the solution for $k=6$. However, our computational resources were not enough to even generate the initial master problem. In the future, other researchers with more computational resources could take it from here and compute more datapoints.

\bibliography{my_ref}
\end{document}